\theoremstyle{plain}
\newtheorem{theorem}{Theorem}[section]
\newtheorem{proposition}[theorem]{Proposition}
\newtheorem{lemma}[theorem]{Lemma}
\theoremstyle{definition}
\newtheorem{definition}[theorem]{Definition}
\newtheorem{remark}[theorem]{Remark}
\newtheorem{example}[theorem]{Example}
\newtheorem*{conditionA}{Condition~(A)}
\theoremstyle{remark}
\renewenvironment{thebibliography}[1]{%
\begin{oldthebibliography}{#1}%
\setlength{\baselineskip}{.9em}
\linespread{1}
\small
\setlength{\parskip}{0ex}%
\setlength{\itemsep}{.29em}
}%
{%
\end{oldthebibliography}%
}
\newcommand{\F}{\mathbb{F}}
\newcommand{\G}{\mathbb{G}}
\newcommand{\Q}{\mathbb{Q}}
\newcommand{\R}{\mathbb{R}}
\renewcommand{\S}{\mathbb{S}}
\newcommand{\cE}{\mathcal{E}}
\newcommand{\cF}{\mathcal{F}}
\newcommand{\cG}{\mathcal{G}}
\newcommand{\cH}{\mathcal{H}}
\newcommand{\cL}{\mathcal{L}}
\newcommand{\fP}{\mathfrak{P}}
\DeclareMathOperator{\diag}{diag}
\DeclareMathOperator{\tr}{tr}
\DeclareMathOperator{\esssup}{ess\, sup}
\newcommand{\as}{\mbox{-a.s.}}
\newcommand{\1}{\mathbf{1}}
\newcommand{\sint}{\stackrel{\mbox{\tiny$\bullet$}}{}}
\newcommand{\br}[1]{\langle #1 \rangle}
\newcommand{\bomega}{\bar{\omega}}
\numberwithin{equation}{section}
\begin{document}

\title{\vspace{-0em}Robust Superhedging with Jumps and Diffusion}
\date{\today}
\author{
  Marcel Nutz%
  \thanks{
  Departments of Statistics and Mathematics, Columbia University, New York, \texttt{mnutz@columbia.edu}. Financial support by NSF Grants DMS-1208985 and DMS-1512900 is gratefully acknowledged. The author thanks Kostas Kardaras, Ariel Neufeld, Nizar Touzi and Jianfeng Zhang for fruitful discussions, and he is grateful to the Associate Editor and two anonymous referees for their constructive comments.
  }
 }

\maketitle \vspace{-1em}

\begin{abstract}
We establish a nondominated version of the optional decomposition theorem in a setting that includes jump processes with nonvanishing diffusion as well as general continuous processes. This result is used to derive a robust superhedging duality and the existence of an optimal superhedging strategy for general contingent claims. We illustrate the main results in the framework of nonlinear L\'evy processes.
\end{abstract}

\vspace{.9em}

{\small
\noindent \emph{Keywords} Superreplication; Optional decomposition; Nondominated model 

\noindent \emph{AMS 2010 Subject Classification}
60G44; 
91B25; 
93E20 
}

\section{Introduction}\label{se:intro}

The classical optional decomposition theorem states that given a process~$Y$ which is a supermartingale under all equivalent martingale measures of some reference process $S$, there exists an integrand $H$ such that $Y-H\sint S$ is nonincreasing, where $H\sint S$ denotes the stochastic integral. Stated differently, $Y$ admits the decomposition $Y=Y_{0}+H\sint S -K$ for some nondecreasing process~$K$. This result is stated on a given probability space $(\Omega,\cF,P_{*})$; without loss of generality, one can assume that $S$ is itself a $P_{*}$ martingale.
The optional decomposition theorem is due to~\cite{ElKarouiQuenez.95} in the case of a continuous process $S$, while the case with jumps is due to~\cite{Kramkov.96} under a boundedness assumption and~\cite{FollmerKabanov.98} in the general case. An alternative proof was presented in~\cite{DelbaenSchachermayer.99}, and~\cite{FollmerKramkov.97} extended the result to include portfolio constraints. Optional decomposition theorems are important for applications in mathematical finance, in particular for superreplication pricing and portfolio optimization.

The first result of this paper (Theorem~\ref{th:optDecomp}) is a version of the optional decomposition theorem which is suitable in the context of model uncertainty: it does not require a reference measure. More precisely, we consider a set $\fP$ of probabilities, possibly nondominated in the sense that its elements are not dominated by a single reference probability $P_{*}$. Suppose that $S$ is a c\`adl\`ag local martingale under all elements of $\fP$, and that $\fP$ contains all equivalent local martingale measures of its elements. If $Y$ is a c\`adl\`ag supermartingale under all $P\in\fP$, we show that there exists an integrand $H$ such that $Y-H\sint S$ is nonincreasing $P$-a.s.\ for all $P\in\fP$. This result is obtained under a technical condition that we call dominating diffusion property (Definition~\ref{de:dominatingDiffusion}): for all $P\in\fP$, the jump characteristic $\nu^{P}$ of $S$ is dominated by the diffusion characteristic $C^{P}$. This includes the case where $S$ is an It\^{o} semimartingale with jumps and nonvanishing diffusion, as well as the case of a general continuous process. The dominating diffusion property allows us to define $H$ in terms of the joint diffusion characteristic of $(S,Y)$ and thus to take advantage of the fact that the latter can be constructed in an aggregated way (i.e., simultaneously for all $P$). The proof that this strategy is superreplicating capitalizes on the classical optional decomposition theorem under each $P\in\fP$. Thus, the argument is quite simple, with the advantage that the result is general and the proof is versatile (see \cite{BiaginiBouchardKardarasNutz.14} for an adaptation to a different setting). In particular, we do not require delicate separability conditions on the filtration or compactness assumptions on $\fP$.

The second result (Theorem~\ref{th:duality}) is a superhedging duality in the setting of model uncertainty. In a setup where Skorohod space is used as the underlying measurable space and the set $\fP$ satisfies certain dynamic programming conditions, it is shown that given a measurable function $f$ at the time horizon $T$,  the robust superhedging price 
\[
  \pi(f):=\inf\big\{x\in\R:\,\exists\, H\mbox{ with } x+ H\sint S_{T} \geq f\;P\as\mbox{ for all }P\in\fP\big\}
\]
satisfies the duality relation 
\[
 \pi(f)=\sup_{P\in\fP} E^P[f].
\]
Moreover, the infimum is attained; i.e., an optimal superhedging strategy exists. We construct this strategy through the above optional decomposition result.

Finally, we illustrate our main results in the setting of nonlinear L\'evy processes; this is a natural example where the set $\fP$ is defined in terms of the characteristics of $S$. We characterize the conditions of our main results in terms of the model primitives and discuss further aspects of our problem formulation.

The main novelty in our results is the applicability to nondominated continuous-time models with jumps. To the best of our knowledge, there is no extant result providing the existence of an optimal strategy or an optional decomposition theorem in this framework. The only previous result is the duality statement of \cite{DolinskySoner.14} in the context of optimal transport; there, the absence of a duality gap is proved by a weak approximation with discrete models and the superreplication is formulated in a pathwise fashion. In ongoing independent work~\cite{CheriditoKupperTangpi.14}, absence of a duality gap will be established by functional analytic methods, though under a compactness condition in Skorohod space which is generally not satisfied in our setting.

The case of continuous processes (i.e., volatility uncertainty) is better studied. The duality formula in this context is investigated by \cite{DenisMartini.06} from a capacity-theoretic point of view, \cite{Peng.10, SonerTouziZhang.2010rep, Song.10} use an approximation by Markovian control problems, and \cite{DolinskySoner.12} uses a weak approximation based on dominated models. On the other hand, \cite{NeufeldNutz.12, NutzSoner.10, NutzZhang.12, PossamaiRoyerTouzi.13, SonerTouziZhang.2010dual} use an aggregation argument which can be seen as a predecessor of our proof; however, they rely on the Doob--Meyer decomposition theorem (under each $P\in\fP$) and this forces them to assume that each $P\in\fP$ corresponds to a complete market. 
Thus, the present results are an important improvement even in the continuous case, as they apply also with incomplete markets.

In the discrete-time case, \cite{BouchardNutz.13} provides a general version of the optional decomposition theorem.
That result is not recovered by the present paper because the dominating diffusion condition is not satisfied. An earlier duality result, without existence of an optimal strategy, is presented by \cite{AcciaioBeiglbockPenknerSchachermayer.12} in a different discrete-time setup; see also \cite{DolinskySoner.13} for related results in the context of transaction costs, \cite{BayraktarZhou.14} for portfolio constraints, \cite{Dolinsky.13} for game options, and \cite{BayraktarHuangZhou.13} for American options.

The remainder of this paper is organized as follows. Section~\ref{se:optionalDecomp} deals with the optional decomposition in a general setting, Section~\ref{se:duality} establishes the duality result on Skorohod space, and Section~\ref{se:Levy} concludes with the example of nonlinear L\'evy processes. 

\section{Optional Decomposition}\label{se:optionalDecomp}

Let $T>0$ and let $(\Omega,\cF)$ be a measurable space equipped with an arbitrary filtration $\F=(\cF_{t})_{t\in[0,T]}$. Let $S=(S_{t})$ be an $\R^{d}$-valued, $\F$-adapted process with c\`adl\`ag paths, for some positive integer $d$. We denote by $\fP(\Omega)$ the set of all probability measures on $(\Omega,\cF)$. Given $P\in\fP(\Omega)$, we write $\F^{P}_{+}$ for the usual $P$-augmentation of $\F$. Moreover, if $S$ is a semimartingale under $P\in\fP(\Omega)$ and $H$ is an $\F^{P}_{+}$-predictable, $d$-dimensional integrand, we denote by $H\sint S_{t}=\int_{0}^{t} H_{u}\, dS_{u}$ the usual stochastic (It\^o) integral under $P$.

A probability $P\in\fP(\Omega)$ is called a \emph{sigma martingale measure for $S$} if $S$ is a sigma martingale with respect to $(P,\F^{P}_{+})$. We recall the definition of a sigma martingale: there exist $\F^{P}_{+}$-predictable sets $(\Sigma_{n})_{n\geq1}$ increasing to $\Omega\times [0,T]$ such that $\1_{\Sigma_{n}}\sint S^{i}$ is a martingale for each $n$ and each component $S^{i}$ of $S$.
(While everything in this section is true also for local martingales, this generalization of the local martingale property will be convenient later on.)
We shall say that $\fP$ is \emph{saturated} if it contains all equivalent sigma martingale measures of its elements. That is, if $P'\in\fP(\Omega)$ is a sigma martingale measure for $S$ and $P'\sim P$ for some $P\in\fP$, then $P'\in\fP$.
Finally, a real-valued, $\F$-adapted process with c\`adl\`ag paths is called a \emph{$\fP$ local supermartingale} if it is a local supermartingale with respect to $(P,\F^{P}_{+})$ for all $P\in\fP$. We refer to \cite{JacodShiryaev.03} for background on stochastic calculus and unexplained notation.

\begin{remark}\label{rk:filtrationDoesntMatter}
  The choice of the filtration $\F^{P}_{+}$ in the above definitions is the most general one: if $X$ is a right-continuous $\F$-adapted process which is a $P$ local supermartingale (or sigma martingale) with respect to $(P,\tilde\F)$ for some filtration $\F\subseteq \tilde\F \subseteq \F^{P}_{+}$, then it has the same property with respect to $(P,\F^{P}_{+})$. This follows from the backward martingale convergence theorem.
\end{remark}

Fix a truncation function $h: \R^d\to \R^d$; that is, a bounded measurable function such that $h(x)=x$ in a neighborhood of the origin. Given $P\in\fP(\Omega)$ under which $S$ is a semimartingale, we denote by $(B^P,C^P,\nu^P)$ the semimartingale characteristics of~$S$ under $P$, relative to $h$. That is, $(B^P,C^P,\nu^P)$ is a triplet of processes such that $P$-a.s., $B^P$ is the finite variation part in the canonical decomposition of $S -\sum_{0 \leq s \leq \cdot} (\Delta S_s- h(\Delta S_s))$ under~$P$, $C^P$ is the quadratic covariation of the continuous local martingale part of~$S$ under~$P$, and $\nu^P$ is the $P$-compensator of $\mu^S$, the integer-valued random measure associated with the jumps of $S$. (Again, the precise choice of the filtration does not matter for the present purposes; cf.\ \cite[Proposition~2.2]{NeufeldNutz.13a}.)

Next, we introduce a notion that will play a key role in our proofs.

\begin{definition}\label{de:dominatingDiffusion}
	Let $P\in\fP(\Omega)$ be a sigma martingale measure for $S$ and let $(B^{P},C^{P},\nu^{P})$ be semimartingale characteristics of $S$ under $P$. We say that $S$ has \emph{dominating diffusion} under $P$ if
  \[
    \nu^{P} \ll (C^{P})^{ii} \quad  P\as,\quad i=1,\dots,d.
  \]
\end{definition}

The notation $\nu^{P} \ll (C^{P})^{ii}$ means that the process $(|x|^{2}\wedge 1) \ast \nu^{P}_{t}:= \int_{0}^{t}\int_{\R^{d}} (|x|^{2}\wedge 1) \,\nu^{P}_{s}(dx,ds)$ is absolutely continuous with respect to the $i$th component on the diagonal of the matrix $C^{P}$. We remark that the first characteristic then necessarily satisfies $B^P\ll (C^{P})^{ii}$ $P$-a.s., because the sigma martingale property implies that $B^{P} = \int (x-h(x))\ast \nu^{P}$; cf.\ \cite[Proposition~III.6.35, p.\,215]{JacodShiryaev.03}. To illustrate the significance of the definition, consider the one-dimensional case $d=1$ for simplicity. Then, it follows that the measure $P\times dC^{P} $ is rich enough to express the properties of $S$ that will be relevant in what follows; in particular, if $H$ and $H'$ are predictable integrands such that $H=H'$ $P\times dC^{P}$-a.e., then $H\sint S=H'\sint S$ $P$-a.s.

The preceding definition is satisfied in the following important cases; see also Lemma~\ref{le:recipe}.

\begin{example}\label{ex:assumptSatisfied}
  (i)  Let $S$ be a sigma martingale with absolutely continuous characteristics under $P$ (with respect to the Lebesgue measure $dt$); i.e., the characteristics are of the form $(dB,dC,d\nu)=(b\,dt,c\,dt,F\,dt)$. If $c$ is a strictly positive matrix $P\times dt$-a.e., then $S$ has dominating diffusion.  

  (ii) Let $S$ be a \emph{continuous} sigma martingale under $P$ (hence a local martingale). Then its characteristics are of the form $(0,C,0)$ and thus $S$ always has dominating diffusion.
\end{example}

The dominating diffusion property does not hold for pure-jump models and in particular for the discrete-time case. However, as mentioned in the Introduction, the latter is covered by the results of~\cite{BouchardNutz.13}. In view of those results, we see the dominating diffusion property as a technical assumption that enables us to argue as in the proof stated below, but we do not expect generic counterexamples to the subsequent theorem even if the property is violated. Nevertheless, \cite{BouchardNutz.13} suggests that other technical assumptions may be necessary and that a proof without dominating diffusion may be substantially less transparent. 

We can now state our first result, a nondominated version of the optional decomposition theorem. We denote by $L(S,\fP)$ the set of all $\R^{d}$-valued, $\F$-predictable processes which are $S$-integrable for all $P\in\fP$. 

\begin{theorem}\label{th:optDecomp}
  Let $\fP$ be a nonempty, saturated set of sigma martingale measures for $S$ such that $S$ has dominating diffusion under all $P\in\fP$. If $Y$ is a $\fP$ local supermartingale, then there exists an $\F$-predictable process $H\in L(S,\fP)$ such that
  \[
    Y-H\sint S \quad \mbox{is nonincreasing $P$-a.s.}
    \footnote{Here $H\sint S$ is the usual It\^o integral under the fixed measure $P$, and this is sufficient for our purposes. If desired, then under the conditions of the theorem, it is also possible to define the integral in a pathwise manner and in particular simultaneously under all $P\in\fP$, by the construction of~\cite{Nutz.11int}.
}   
  \quad \mbox{for all} \quad P\in\fP.
  \]
\end{theorem}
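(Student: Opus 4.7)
The strategy is to define a single predictable $H$ from the joint second (diffusion) characteristic of $(S,Y)$, which is available in an aggregated form that is simultaneously a version of this characteristic under every $P \in \fP$, and then to use the classical optional decomposition theorem under each fixed $P$ only as a verification device. First, for each $P \in \fP$, I would invoke the F\"ollmer-Kabanov optional decomposition: saturation of $\fP$ combined with the $\fP$ local-supermartingale property of $Y$ ensures its hypotheses hold under $P$, yielding a predictable $H^P \in L(S, P)$ and a nondecreasing, adapted $K^P$ with
\[
   Y = Y_{0} + H^P \sint S - K^P \q P\mbox{-a.s.}
\]

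Since $Y_0 - K^P$ is nonincreasing, its continuous local-martingale part vanishes. Decomposing the joint second characteristic of $(S,Y)$ under $P$ into blocks $C^{P,SS}$, $C^{P,SY}$, $C^{P,YY}$ and equating continuous martingale parts forces the absolute continuity
\[
   dC^{P,SY}_t = H^P_t \, dC^{P,SS}_t.
\]
Next, I would invoke an aggregation result of Karandikar type to obtain $\F$-adapted processes $\bar C^{SS}$, $\bar C^{SY}$ that are $P$-indistinguishable from $C^{P,SS}$, $C^{P,SY}$ for every $P \in \fP$, and define an $\F$-predictable $H$ pathwise as the Radon-Nikodym derivative $d\bar C^{SY}/d\bar C^{SS}$, using a measurable selection (for instance a Moore-Penrose pseudoinverse) on the predictable set where $\bar C^{SS}$ is degenerate. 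By construction, for each $P \in \fP$, the processes $H$ and $H^P$ coincide as elements of $L^{2}(P \otimes d\bar C^{SS})$.

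The last step is to verify that this $H$ does the job, and here the dominating diffusion assumption is essential. Since $\nu^P \ll (C^{P})^{ii}$ and consequently $B^P \ll (C^{P})^{ii}$ (as noted after Definition~\ref{de:dominatingDiffusion}), the measure $P \otimes dC^{P,SS}$ controls the continuous-martingale, compensated-jump, and finite-variation parts of the It\^o integral simultaneously. Therefore the equality $H = H^P$ up to $P \otimes dC^{P,SS}$-null sets upgrades to $H \in L(S,P)$ with $H \sint S = H^P \sint S$ $P$-a.s., so that $Y - H \sint S = Y_0 - K^P$ is nonincreasing $P$-a.s. Iterating over $P \in \fP$ concludes. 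The main obstacle is precisely this identification step: in the purely continuous case the Kunita-Watanabe projection would settle it directly, but with jumps the integrand in the F\"ollmer-Kabanov decomposition is not uniquely determined by the continuous cross-characteristic alone. The dominating diffusion hypothesis is exactly what forces the jump and drift contributions of $H^P$ to be $P \otimes dC^{P,SS}$-measurably determined, explaining why a fully aggregated construction succeeds here while failing for pure-jump or discrete-time models (cf.\ \cite{BouchardNutz.13}).
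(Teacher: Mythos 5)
Your proposal is correct and follows essentially the same route as the paper: aggregate the joint diffusion characteristic of $(S,Y)$, define $H$ via a Moore--Penrose pseudoinverse, invoke the classical optional decomposition under each fixed $P$ to get $H^P$, identify $H$ with $H^P$ by equating continuous local-martingale parts, and use the dominating diffusion hypothesis to propagate this identification from the continuous part to the full integral. The only small discrepancy is that for the sigma-martingale formulation stated in the theorem, the appropriate reference is Delbaen--Schachermayer rather than F\"ollmer--Kabanov (the latter is the local-martingale version; see Remark~\ref{rk:sigmaMart}), and the paper cites \cite[Proposition~6.6]{NeufeldNutz.13a} rather than a Karandikar-type aggregation for the existence of the aggregated second characteristic, though these choices do not change the structure of the argument.

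One place where you should be a bit more careful if you write this out in full: the equality $c^{SY} = H^P c^S$ only forces $H - H^P$ into the kernel of $c^S$, which suffices for $H\sint S^c = H^P\sint S^c$ but is weaker than pointwise equality. The paper resolves this by introducing the increasing process $A^*$ with $dA^*/dA = \min_i d(C^S)^{ii}/dA$: on the support of $dA^*$ the matrix $c^S$ is invertible, hence $H=H^P$ pointwise $P\times dA^*$-a.e., and dominating diffusion guarantees that $dA^*$ carries all the jump and drift characteristics of $S$. Your phrasing ``coincide as elements of $L^2(P\otimes d\bar C^{SS})$'' conflates these two levels; the conclusion is correct, but the degenerate-set bookkeeping via $A^*$ is the precise mechanism that makes the final upgrade rigorous.
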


\begin{proof}
	We say that a property holds $\fP$-q.s.\ if it holds $P$-a.s.\ for all $P\in\fP$. The $(d+1)$-dimensional process $(S,Y)$ is a semimartingale under all $P\in\fP$ (in the filtration $\F^{P}_{+}$, but then also in $\F$; cf.\ \cite[Proposition~2.2]{NeufeldNutz.13a}). According to \cite[Proposition~6.6]{NeufeldNutz.13a}\footnote{That proposition does not use the separability assumptions on the filtration that are imposed for the main results of \cite{NeufeldNutz.13a}.}, there exists an $\F$-predictable process $C^{(S,Y)}$ with values in $\S^{d+1}_{+}$ (the set of nonnegative definite symmetric matrices), having $\fP$-q.s.\ continuous and nondecreasing paths, and which coincides $P$-a.s.\ with the usual  second characteristic $\br{(S,Y)^{c}}^{P}$ under each $P\in\fP$. Let $C^{S}$ be the $d\times d$ submatrix corresponding to $S$ and let $C^{SY}$ be the $d$-dimensional vector corresponding to the quadratic covariation of $S$ and $Y$. Setting $A_{t}:=\tr C^{S}_{t}$ to be the trace of $C^{S}$, we have $C^{S}\ll A$ $\fP$-q.s.\ and $C^{SY}\ll A$ $\fP$-q.s. Thus, we have $dC^{S} = c^{S}dA$ $\fP$-q.s.\ and $dC^{SY} = c^{SY}dA$ $\fP$-q.s.\ for the derivatives defined by
	\[
		c^{S}_{t}:= \tilde{c}^{S}_{t} \1_{\{\tilde{c}^{S}_{t}\in \S^{d}_{+}\}}, 
		\quad \tilde{c}^{S}_{t} := \limsup_{n\to\infty} \frac{C^{S}_{t} - C^{S}_{(t-1/n)\vee 0}} {A_{t} - A_{(t-1/n)\vee 0}}
  \]
  and
	\[
		c^{SY}_{t}:= \tilde{c}^{SY}_{t} \1_{\{\tilde{c}^{SY}_{t}\in \R^{d}\}}, 
		\quad \tilde{c}^{SY}_{t} := \limsup_{n\to\infty} \frac{C^{SY}_{t} - C^{SY}_{(t-1/n)\vee 0}} {A_{t} - A_{(t-1/n)\vee 0}},
  \]  
  where all operations are to be understood in a componentwise fashion (with $0/0:=0$, say). We observe that $c^{S}$ and $c^{SY}$ are $\F$-predictable.
  Let $(c^{S})^{\oplus}$ be the Moore--Penrose pseudoinverse of $c^{S}$. We define the $\F$-predictable process
	\[
	  H:= c^{SY} (c^{S})^{\oplus}
	\]	
	and show that it satisfies the requirements of the theorem.
	
	Fix $P\in\fP$. Recalling that $\fP$ contains all sigma martingale measures equivalent to $P$, we may apply the classical optional decomposition theorem in the form of \cite[Theorem~5.1]{DelbaenSchachermayer.99} for $P$ and the filtration $\F^{P}_{+}$ (which satisfies the usual assumptions) and we obtain that there exist an $\F^{P}_{+}$-predictable, $S$-integrable process $H^{P}$ and a nondecreasing process $K^{P}$ such that 
	\begin{equation}\label{eq:usualOptDecomp}
	  Y=Y_{0}+H^{P}\sint S - K^{P}\quad P\as
	\end{equation}
  Identifying the continuous local martingale parts with respect to $P$ (cf.\ \cite{JacodShiryaev.03}) on both sides of~\eqref{eq:usualOptDecomp} yields
	\begin{equation*}
	  Y^{c}=H^{P}\sint S^c\quad P\as
	\end{equation*}	
	and then taking quadratic covariation with $S^{c}$ leads to
	\begin{equation*}
	  d\br{S^{c},Y^{c}}=H^{P} d\br{S^c} \quad P\as
	\end{equation*}
	This, however, is equivalent to
	\begin{equation}\label{eq:equalityUnderA}
	  c^{SY}=H^{P} c^{S} \quad P\times dA\mbox{-a.e.}
	\end{equation}
	On the one hand, \eqref{eq:equalityUnderA} implies that $H$ is $S^{c}$-integrable under $P$ and
	\begin{equation}\label{eq:equalityScont}
	  H\sint S^{c} = H^{P}\sint S^{c}\quad P\as
	\end{equation}
	On the other hand, define the nondecreasing process $A^{*}$ by 
	\[
	  \frac{dA^{*}}{dA} = \min_{1\leq i\leq d} \frac{d(C^{S})^{ii}}{dA}\quad P\as
	\]
	As $A^{*}\ll A$, \eqref{eq:equalityUnderA} yields that
	\begin{equation*}
	  c^{SY}=H^{P} c^{S} \quad P\times dA^{*}\mbox{-a.e.}
	\end{equation*}
	It follows from the definition of $A^{*}$ that $c^{S}$ is invertible $P\times dA^{*}$-a.e., so we deduce that
	\begin{equation}\label{eq:optDecompHedgeIdentification}
    H=H^{P}\quad P\times dA^{*}\mbox{-a.e.} \quad \mbox{for all}\quad P\in\fP.
  \end{equation}
  By the dominating diffusion assumption, $A^{*}$ dominates the characteristics of $S-S^{c}$ under $P$, and so \eqref{eq:optDecompHedgeIdentification} implies that $H$ is $S-S^{c}$ integrable under~$P$ and $H\sint (S-S^{c})=H^{P}\sint (S-S^{c})$ $P$-a.s. In view of~\eqref{eq:equalityScont}, we conlcude that $H\sint S=H^{P}\sint S$ $P$-a.s.
  This holds for all $P\in\fP$, and now the theorem follows from~\eqref{eq:usualOptDecomp}.
\end{proof}

\begin{remark}\label{rk:sigmaMart}
  Theorem~\ref{th:optDecomp} remains true if we replace ``sigma martingale'' by ``local martingale'' throughout this section. The proof is the same; we merely need to replace the reference to \cite[Theorem~5.1]{DelbaenSchachermayer.99} by \cite[Theorem~1]{FollmerKabanov.98}.
\end{remark}


\section{Superreplication Duality}\label{se:duality}

In this section, we utilize Theorem~\ref{th:optDecomp} to provide a superhedging duality and an optimal superhedging strategy in a fairly general setting. Apart from the technical details (and of course the use of Theorem~\ref{th:optDecomp}), the line of argument is similar as in~\cite{NeufeldNutz.12}.

\subsection{Setting}

The optional decomposition theorem will be applied to a process $Y_{t}$ which is a version of the dynamic superhedging price. More precisely, this process will be constructed from the (conditional) sublinear expectation $\cE_{t}(f)$ of the claim $f$, and the $\fP$ supermartingale property of $Y$ will be deduced from the dynamic programming associated to $\cE_{t}(\cdot)$. In this section, we describe a setting where this can be implemented.

Let $\Omega=D_0(\R_+,\R^{d'})$ be the space of all c\`adl\`ag paths $\omega=(\omega_t)_{t\geq0}$ in $\R^{d'}$ with $\omega_0=0$, for some positive integer $d'$. We equip $\Omega$ with the Skorohod topology and the corresponding Borel $\sigma$-field $\cF$. Moreover, we denote by $\F=(\cF_t)_{t\geq0}$ the filtration generated by the canonical process $(t,\omega)\mapsto \omega_{t}$ and by
$\fP(\Omega)$ the space of probability measures on $(\Omega,\cF)$ equipped with the topology of weak convergence.

Given $t\geq0$, we introduce the following notation. The concatenation of two paths $\omega, \tilde{\omega}\in \Omega$ at $t$ is given by
\[
 (\omega\otimes_t \tilde{\omega})_u := \omega_u \1_{[0,t)}(u) + \big(\omega_{t} + \tilde{\omega}_{u-t}\big) \1_{[t, \infty)}(u),\quad u\geq 0.
\]
For any $P\in\fP(\Omega)$, there is a regular conditional
probability distribution $\{P^\omega_t\}_{\omega\in\Omega}$
given $\cF_t$ satisfying
\[
  P^\omega_t\big\{\omega'\in \Omega :\, \omega' = \omega \mbox{ on } [0,t]\big\} = 1\quad\mbox{for all}\quad\omega\in\Omega.
\]
We then define $P^{t,\omega}\in \fP(\Omega)$ by
\[
  P^{t,\omega}(F):=P^\omega_t(\omega\otimes_t F),\quad F\in \cF, \quad\mbox{where }\omega\otimes_t F:=\{\omega\otimes_t \tilde{\omega}:\, \tilde{\omega}\in F\}.
\]
Given a function $f$ on $\Omega$ and $\omega\in\Omega$,
we also define the function $f^{t,\omega}$  by
\[
  f^{t,\omega}(\tilde{\omega}) :=f(\omega\otimes_t \tilde{\omega}),\quad \tilde{\omega}\in\Omega.
\]
If $f$ is measurable, then $E^{P^{t,\omega}}[f^{t,\omega}]=E^P[f|\cF_t](\omega)$ for $P$-a.e.\ $\omega\in\Omega$. (The convention $\infty - \infty = -\infty$ is used; e.g., in defining the conditional expectation $E^P[f|\cF_t]:=E^P[f^+|\cF_t]-E^P[f^-|\cF_t]$.)

While we will eventually be concerned with a single set $\fP\subseteq \fP(\Omega)$ of measures, it is technically useful to induce $\fP$ by a family $\{\fP(s,\omega)\}_{(s,\omega)\in \R_+\times\Omega}$ of subsets of $\mathfrak{P}(\Omega)$ as follows. Let $\{\fP(s,\omega)\}$ be given and adapted in the sense that
\[
  \fP(s,\omega)=\fP(s,\omega')\quad\mbox{if}\quad \omega|_{[0,s]}=\omega'|_{[0,s]}.
\]
The set $\fP(0,\omega)$ is independent of $\omega$ as all paths start at the origin, and so we can define $\fP:=\fP(0,\omega)$. We assume throughout that $\fP\neq\emptyset$. In applications, $\fP$ will be the primary object and we specify a corresponding family $\{\fP(s,\omega)\}$ such that $\fP=\fP(0,\omega)$; see Section~\ref{se:Levy} for an example. Properties~(ii) and~(iii) of Condition~(A) below imply that the family $\{\fP(s,\omega)\}$ is essentially determined by the set $\fP$.

We recall that a subset of a Polish space is called analytic if it is the image of a Borel subset of another Polish space under a Borel-measurable mapping; in particular, any Borel set is analytic (see also \cite[Chapter~7]{BertsekasShreve.78} for background). We can now state the following condition on the measurability and stability under conditioning and pasting of $\{\fP(s,\omega)\}$; it will enable us to obtain the sublinear expectation $\cE_{t}(\cdot)$ and its dynamic programming.
 
\begin{conditionA}\label{condA}
  For all $0\leq s\leq t$, $\bomega\in\Omega$ and $P\in \fP(s,\bomega)$,
  \begin{enumerate}
    \item  $\{(P',\omega): \omega\in\Omega,\; P'\in \fP(t,\omega)\} \,\subseteq\, \mathfrak{P}(\Omega)\times\Omega$ is analytic;

    \item $P^{t-s,\omega} \in\fP(t,\bomega\otimes_s\omega)$ for $P$-a.e.\ $\omega\in\Omega$;

    \item if $\kappa: \Omega \to \mathfrak{P}(\Omega)$ is an $\cF_{t-s}$-measurable kernel and $\nu(\omega)\in \fP(t,\bomega\otimes_s\omega)$ for $P$-a.e.\ $\omega\in\Omega$,
    then the measure defined by
    \[
      \bar{P}(A)=\iint (\1_A)^{t-s,\omega}(\omega') \,\kappa(d\omega';\omega)\,P(d\omega),\quad A\in \cF
    \]
    is an element of $\fP(s,\bomega)$.
  \end{enumerate}
\end{conditionA}

Some more notation is needed to state the required result on $\cE_{t}(\cdot)$. Given a $\sigma$-field $\cG$, the universal completion of $\cG$ is the $\sigma$-field $\cG^*=\cap_P \cG^{(P)}$, where $P$ ranges over all probability measures on $\cG$ and $\cG^{(P)}$ is the completion of $\cG$ under $P$.  Moreover, a scalar function $f$ is called upper semianalytic if $\{f>a\}$ is analytic for all $a\in\R$. Any Borel-measurable function is upper semianalytic and any upper semianalytic function is universally measurable.

\begin{proposition}\label{pr:NvH}
   Let Condition~(A) hold, let $0\leq s\leq t$ and let $f:\Omega\to\overline{\R}$ be an upper semianalytic function. Then the function
   \[
     \cE_t(f)(\omega):=\sup_{P\in\fP(t,\omega)} E^P[f^{t,\omega}],\quad\omega\in\Omega
   \]
   is $\cF_t^*$-measurable and upper semianalytic. Moreover,
   \[
     \cE_s(f)(\omega) = \cE_s(\cE_t(f))(\omega)\quad\mbox{for all}\quad \omega\in\Omega.
   \]
   Furthermore, with $\fP(s;P)=\{P'\in \fP:\, P'=P \mbox{ on } \cF_s\}$, we have
   \begin{equation}\label{eq:esssupDPP}
     \cE_s(f) = \mathop{\esssup^P}_{P'\in \fP(s;P)} E^{P'}[\cE_t(f)|\cF_s]\quad P\as\quad\mbox{for all}\quad P\in\fP.
   \end{equation}
\end{proposition}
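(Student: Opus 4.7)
The statement is essentially the dynamic programming principle for the sublinear expectation $\cE_t(\cdot)$ under the abstract measurability/stability hypotheses of Condition~(A), and the proof follows the template of \cite[Theorem~2.1]{NutzVanHandel.12}. I would structure it in three stages corresponding to the three claims.

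First, for the upper semianalyticity and universal measurability of $\cE_t(f)$, I would start from the fact that for any upper semianalytic $f$, the map $(P,\omega)\mapsto E^P[f^{t,\omega}]$ on $\fP(\Omega)\times\Omega$ is upper semianalytic; this is a standard application of the Bertsekas--Shreve machinery (upper semianalyticity is preserved by measurable substitution $\omega\mapsto f^{t,\omega}$ and by integration against a probability, see \cite[Chapter~7]{BertsekasShreve.78}). Condition~(A)(i) supplies an analytic graph $\mathrm{gph}\,\fP(t,\cdot)$, so $\cE_t(f)$ is the projection onto $\omega$ of an upper semianalytic function restricted to an analytic set; since analytic sets are closed under projection, $\cE_t(f)$ is upper semianalytic, hence $\cF_t^*$-measurable.

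Second, I would prove the pointwise tower relation $\cE_s(f)=\cE_s(\cE_t(f))$ by showing both inequalities. For ``$\leq$'', fix $\omega$ and $P\in\fP(s,\omega)$. Using the regular conditional probability together with Condition~(A)(ii), we have $P^{t-s,\omega'}\in\fP(t,\omega\otimes_s\omega')$ for $P$-a.e.\ $\omega'$, and the identity $(f^{s,\omega})^{t-s,\omega'}=f^{t,\omega\otimes_s\omega'}$ gives
\[
  E^{P}[f^{s,\omega}]=\int E^{P^{t-s,\omega'}}\!\big[f^{t,\omega\otimes_s\omega'}\big]\,P(d\omega')\le \int (\cE_t(f))(\omega\otimes_s\omega')\,P(d\omega') = E^{P}[(\cE_t(f))^{s,\omega}],
\]
and taking $\sup_{P\in\fP(s,\omega)}$ on both sides yields the desired bound. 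For ``$\geq$'', the key step is a measurable selection: for any $\eps>0$, using the analyticity from Condition~(A)(i) and the Jankov--von~Neumann theorem, I would produce an $\cF_{t-s}$-measurable kernel $\omega'\mapsto \kappa(\cdot;\omega')$ with $\kappa(\cdot;\omega')\in\fP(t,\omega\otimes_s\omega')$ and $E^{\kappa(\cdot;\omega')}[f^{t,\omega\otimes_s\omega'}]\geq (\cE_t(f))(\omega\otimes_s\omega')-\eps$. Then, given any $P\in\fP(s,\omega)$, Condition~(A)(iii) allows pasting $P$ with $\kappa$ into a single measure $\bar{P}\in\fP(s,\omega)$, and unwinding the definition of $\bar P$ gives $E^{\bar P}[f^{s,\omega}]\geq E^P[(\cE_t(f))^{s,\omega}]-\eps$. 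Taking $\sup_P$ and $\eps\downarrow 0$ closes the inequality.

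Third, for the conditional form~\eqref{eq:esssupDPP}, the standard trick is to show that the family $\cH:=\{E^{P'}[\cE_t(f)|\cF_s]:P'\in\fP(s;P)\}$ is directed upward under pasting: given $P_1,P_2\in\fP(s;P)$, Condition~(A)(iii) allows us to paste $P_1$ and $P_2$ on the event $\{E^{P_1}[\cE_t(f)|\cF_s]\geq E^{P_2}[\cE_t(f)|\cF_s]\}$ to obtain a third element of $\fP(s;P)$ whose conditional expectation dominates both. Consequently, the essential supremum is attained along an increasing sequence, so it suffices to prove the equality on the level of ``$\sup$ of integrals against $P$'', which reduces via the tower property to the pointwise identity established in step~two together with a measurable selection argument showing that the supremum over $\fP(s;P)$ equals the supremum over $\fP(s,\omega)$ for $P$-a.e.\ $\omega$ (again using Condition~(A)(ii)--(iii) to realize the latter measures within $\fP(s;P)$).

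The main obstacle is the measurable selection of the $\eps$-optimizers in the ``$\geq$'' part of step two: one needs the analyticity of the graph of $\fP(t,\cdot)$ (from Condition~(A)(i)), the fact that $(Q,\omega')\mapsto E^{Q}[f^{t,\omega\otimes_s\omega'}]$ is upper semianalytic, and the Jankov--von~Neumann universally measurable selection theorem, and one has to verify that the resulting $\cF_{t-s}$-measurable kernel fits exactly the hypothesis of the pasting Condition~(A)(iii). Once this is carried out cleanly, the rest of the argument is bookkeeping built on top of the Bertsekas--Shreve framework.
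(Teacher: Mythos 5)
The paper does not give a proof of this proposition; it simply cites \cite[Theorem~2.3]{NutzVanHandel.12} (noting the extension from $C(\R_+,\R^{d'})$ to Skorohod space is immediate since only the Polish structure matters). Your sketch is a correct reconstruction of the argument from that reference---Bertsekas--Shreve machinery for the upper semianalyticity, Jankov--von~Neumann selection of $\eps$-optimal kernels plus the pasting in Condition~(A)(iii) for the ``$\geq$'' half of the tower property, and upward directedness of the conditional expectations for the essential-supremum formulation---so it matches the approach the paper delegates to.
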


See \cite[Theorem~2.3]{NutzVanHandel.12} and the subsequent remark for this result. The theorem in \cite{NutzVanHandel.12} is stated for the space of continuous paths but carries over to Skorohod space without changes; only the Polish structure is important.

\subsection{Duality Result}

In what follows, we fix a set $\fP\subseteq \fP(\Omega)$ determined by a family $\{\fP(s,\omega)\}$ as above. We shall use the filtration $\G= (\cG_t)_{0\leq t \leq T}$, where
\begin{equation*}
  \cG_t:= \cF^{*}_{t}\vee \mathcal{N}^{\fP};
\end{equation*}
here $\cF^{*}_{t}$ is the universal completion of $\cF_t$ and $\mathcal{N}^{\fP}$ is the collection of sets which are $(\cF_T,P)$-null for all $P\in\fP$. Let $S$ be an $\R^{d}$-valued, $\G_{+}$-adapted process with c\`adl\`ag paths.
A $\G$-predictable process $H\in L(S,\fP)$ is called \emph{admissible} if
$H\sint S$ is a $P$-supermartingale for all $P\in\fP$, and we denote by $\cH$ the set of all such processes.

\begin{theorem}\label{th:duality}
  Suppose that $\{\fP(s,\omega)\}$ satisfies Condition~(A) and that $\fP$ is a nonempty, saturated set of sigma martingale measures for $S$ such that $S$ has dominating diffusion under all $P\in\fP$. Moreover, let
  $f:\Omega\to \overline{\R}$ be an upper semianalytic, $\cG_T$-measurable function such that  $\sup_{P\in\fP} E^P[|f|]<\infty$.
  Then
  \begin{align*}
    &\sup_{P\in\fP} E^P[f] \\
    &=\min\big\{x\in\R:\,\exists\, H\in \cH\mbox{ with } x+ H\sint S_{T} \geq f\;P\as\mbox{ for all }P\in\fP\big\}.
  \end{align*}
\end{theorem}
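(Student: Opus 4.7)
The plan is to prove weak duality by a direct supermartingale argument and then to construct an optimal superhedging strategy by applying Theorem~\ref{th:optDecomp} to a c\`adl\`ag version of the dynamic superhedging price $Y_t:=\cE_t(f)$. The easy inequality is immediate: for any admissible $H\in\cH$ and any $x\in\R$ with $x+H\sint S_T\geq f$ $P$-a.s.\ for all $P\in\fP$, the supermartingale property of $H\sint S$ yields $E^P[f]\leq x$ for every $P\in\fP$, and taking the supremum in $P$ gives $\sup_{P\in\fP}E^P[f]\leq x$.

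For the reverse direction I would set $x_0:=\sup_{P\in\fP}E^P[f]$, finite by hypothesis, and build an $H\in\cH$ with $x_0+H\sint S_T\geq f$ $\fP$-q.s. The candidate process is $Y_t(\omega):=\cE_t(f)(\omega)$. By Proposition~\ref{pr:NvH} it is upper semianalytic and $\cF^{*}_t$-measurable, and specializing~\eqref{eq:esssupDPP} to $P'=P$ shows $E^P[Y_u\mid\cF_s]\leq Y_s$ $P$-a.s.\ for $s\leq u$, so $Y$ is a $\fP$-supermartingale in $\F$. The same reasoning applied to $|f|$ yields the integrable envelope $|Y_t|\leq \cE_t(|f|)$ with $E^P[\cE_t(|f|)]\leq \cE_0(|f|)=\sup_{P\in\fP}E^P[|f|]<\infty$. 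To invoke Theorem~\ref{th:optDecomp} I then need a c\`adl\`ag version $\tilde Y$; the natural choice is $\tilde Y_t:=\limsup_{r\downarrow t,\,r\in\Q}Y_r$ off a $\fP$-polar set, together with $\tilde Y_T:=f$. Under each individual $P\in\fP$, Doob's regularization makes $\tilde Y$ a c\`adl\`ag $P$-supermartingale coinciding with $Y$ up to $P$-evanescence; because the construction is pathwise, $\tilde Y$ is $\G$-adapted and qualifies as a $\fP$-supermartingale in the sense required by Theorem~\ref{th:optDecomp} (Remark~\ref{rk:filtrationDoesntMatter} moves the supermartingale property to each $(P,\F^{P}_+)$).

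Theorem~\ref{th:optDecomp} now provides a predictable $H\in L(S,\fP)$ such that $\tilde Y-H\sint S$ is nonincreasing $P$-a.s.\ for every $P\in\fP$. Evaluating at $t=T$ and using $\tilde Y_0=x_0$, $\tilde Y_T=f$ yields $x_0+H\sint S_T\geq f$ $\fP$-q.s., the desired superreplication. Admissibility follows from the bound $H\sint S_t\geq \tilde Y_t-x_0\geq -\cE_t(|f|)-|x_0|$, which is uniformly $L^1(P)$-bounded from below; since $H\sint S$ is a $P$-sigma martingale, Ansel--Stricker upgrades it to a $P$-local martingale, and Fatou's lemma then promotes it to a $P$-supermartingale, so $H\in\cH$. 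The main technical obstacle is the c\`adl\`ag aggregation step: producing a single $\G$-adapted right-continuous process $\tilde Y$ which, simultaneously under every $P\in\fP$, is a version of $Y=\cE_\cdot(f)$ and satisfies $\tilde Y_0=x_0$ and $\tilde Y_T=f$ $\fP$-q.s. This nondominated regularization is handled via the universal-completion setup together with the dynamic programming identity~\eqref{eq:esssupDPP}, in the spirit of~\cite{NeufeldNutz.12}.
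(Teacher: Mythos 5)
Your overall strategy matches the paper's: right-continuous regularization of $\cE_\cdot(f)$ to get a c\`adl\`ag $\fP$-supermartingale, then Theorem~\ref{th:optDecomp}, then Fatou for admissibility. But there is a genuine gap at $t=0$. After the pathwise regularization $\tilde Y_t=\limsup_{r\downarrow t,\,r\in\Q}\cE_r(f)$, the random variable $\tilde Y_0$ is $\cF_{0+}$-measurable (not $\cF_0$-measurable) and in general is \emph{not} equal to the constant $x_0=\cE_0(f)$; right-continuity of a supermartingale can fail precisely at time $0$, and the modification theorem only gives $E^{P}[\tilde Y_0\mid\cF_0]\le\cE_0(f)$. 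Your assertion ``$\tilde Y_0=x_0$'' is therefore unjustified, and it is exactly what is needed to conclude $x_0+H\sint S_T\ge f$.

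What is actually needed (and provable) is the weaker statement $\tilde Y_0\le \cE_0(f)$ $P$-a.s.\ for every $P\in\fP$, but this itself requires an argument: the paper shows that the least constant $y^P_0$ dominating $\tilde Y_0$ $P$-a.s.\ equals $\sup_Q E^Q[\tilde Y_0]$ over probabilities $Q\sim P$ on $\cF_{0+}$, and then uses saturation of $\fP$ to realize each such $Q$ as the restriction of some $P'\in\fP$ (the Radon--Nikodym density is $\cF_{0+}$-measurable, so tilting preserves the sigma-martingale property of $S$). This gives $y^P_0\le\sup_{P'\in\fP}E^{P'}[\tilde Y_0]\le\cE_0(f)$, which then yields the superhedge starting from $x_0$. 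Without this step the proof is incomplete. Your handling of $\tilde Y_T=f$ and of admissibility (sigma martingale bounded below by an integrable process, Ansel--Stricker plus Fatou) is in line with the paper.
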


\begin{proof}
  As usual, one inequality in Theorem~\ref{th:duality} is immediate: if $x\in\R$ and there exists $H\in\cH$ such that $x+ H\sint S_{T}\geq f$, the supermartingale property of $H\sint S$ implies that $x\geq E^P[f]$ for all $P\in\fP$. Hence, we focus on showing that there exists $H\in\cH$ with
  \begin{equation*}
  \sup_{P'\in\fP} E^{P'}[f] + H\sint S_{T} \geq f \quad P \mbox{-a.s.} \quad \mbox{for all} \quad P \in \fP.
  \end{equation*}
  In view of Theorem~\ref{th:optDecomp}, we shall construct a c\`adl\`ag $\fP$ supermartingale $Y$ satisfying
  \begin{equation}\label{eq:aimY}
    Y_{0} \leq \sup_{P'\in\fP} E^{P'}[f] \quad \mbox{and} \quad Y_{T}=f \quad P\mbox{-a.s.} \quad \mbox{for all} \quad P \in \fP.
  \end{equation}
  
  Recall from Proposition~\ref{pr:NvH} that
  \[
    \cE_t(f)(\omega):=\sup_{P\in\fP(t,\omega)} E^P[f^{t,\omega}]
  \]
is $\cG_t$-measurable for all $t$. Our assumption that $\sup_{P\in\fP} E^P[|f|]<\infty$ implies that $\sup_{P\in\fP} E^P[|\cE_t(f)|]<\infty$; the argument is the same as in Step~1 of the proof of \cite[Theorem~2.3]{NeufeldNutz.12}.
Now~\eqref{eq:esssupDPP} implies that $\cE_t(f)$ is an $(\F^*,P)$-supermartingale for all $P\in\fP$. We can then define
\[
  Y'_{t}:= \limsup_{r\downarrow t,\, r\in \Q} \cE_{r}(f)\quad\mbox{for}\quad t<T \quad\mbox{and}\quad Y'_{T}:= \cE_{T}(f).
\]
The modification theorem for supermartingales \cite[Theorem~VI.2]{DellacherieMeyer.82} yields that outside a set $N\in\mathcal{N}^{\fP}$, the paths of $Y'$ are c\`adl\`ag and the limit superior is actually a limit, and moreover that $Y'$ is a $(\G_+,P)$-supermartingale for all $P\in\fP$. We define $Y:=Y'\1_{N^{c}}$; then all paths of $Y$ are c\`adl\`ag and $Y$ is still a $(\G_+,P)$-supermartingale for all $P\in\fP$ (recall that $\mathcal{N}^{\fP}\subseteq \cG_{0}$). In particular, $Y$ is a $\fP$ supermartingale relative to the filtration $\G_{+}$ in the terminology of Theorem~\ref{th:optDecomp}.

Fix $P \in \fP$; we check that $Y_{T}=f$ $P$-a.s. Indeed, we have $Y_{T}=Y'_{T}=\cE_{T}(f)$ $P$-a.s., and since $\cG_{T}=\cF_{T}$ $P$-a.s., \eqref{eq:esssupDPP} yields that 
$\cE_T(f)=f$ $P$-a.s. 

Next, we prove the first part of~\eqref{eq:aimY}; here the subtlety is that $Y_{0}$ need not be deterministic in general.  Fix again $P \in \fP$; we need to show that
\begin{equation}\label{eq:ineqTimeZero}
  Y_0 \leq \sup_{P'\in\fP} E^{P'}[f]\equiv \cE_0(f) \quad P \mbox{-a.s.}
\end{equation}
Let $y^{P}_0\in\R$ denote the least constant dominating $Y_{0}$ $P$-a.s.; then the above is of course equivalent to
\[
  y^{P}_0 \leq \cE_0(f).
\]
Let $P'\in\fP$. From the definition of $Y$ and \cite[Theorem~VI.2]{DellacherieMeyer.82}, we have that 
\[
 E^{P'}[Y_0|\cF_0]\leq \cE_0(f) \quad P'\as,
\]
but as $\cF_0=\{\emptyset,\Omega\}$, both sides are identically equal to real numbers and thus
$
 \sup_{P'\in\fP}E^{P'}[Y_0]\leq \cE_0(f)
$
as $P'\in\fP$ was arbitrary. 
As a result, it suffices to show that 
\[
  y^{P}_0\leq \sup_{P'\in\fP}E^{P'}[Y_0].
\]
To this end, note that $y^{P}_0 =\sup_{Q}E^{Q}[Y_0]$, where the supremum is taken over all probability measures $Q$ on $\cF_{0+}$ which are equivalent to $P$. Hence, it suffices to establish that each such $Q$ is the restriction to $\cF_{0+}$ of some element $P'$ of $\fP$. Indeed, fix $Q$ and let $Z=dQ/dP$ be a Radon--Nikodym derivative with respect to $\cF_{0+}$. We define the measure $P'$ on $\cG_{T}$ by
$dP'=Z dP$; then $Q=P'|_{\cF_{0+}}$. Moreover, using that $Z$ is $\cF_{0+}$-measurable and that $S$ is right-continuous, we see that $P'$ is again a sigma martingale measure for $S$ and thus $P'\in\fP$ by the assumed saturation of $\fP$. We have shown the inequality~\eqref{eq:ineqTimeZero}, and this completes the construction of the $\fP$ supermartingale $Y$ satisfying~\eqref{eq:aimY}.

Theorem~\ref{th:optDecomp}, applied with the $\sigma$-field $\cG_{T}$ and the filtration $\G_{+}$, yields a $\G_{+}$-predictable process $H\in L(S,\fP)$ such that 
\[
  \sup_{P'\in\fP} E^{P'}[f] + H\sint S_{T} \geq Y_{0} + H\sint S_{T} \geq Y_{T} = f \quad P\mbox{-a.s.} \quad \mbox{for all} \quad P \in \fP.
\]
We recall that $\G_{+}$-predictability is the same as $\G$-predictability because any $\G_{+}$-adapted left-continuous process is also $\G$-adapted. To see that $H$ is admissible, we note that for every $P\in\fP$, the sigma martingale $H\sint S$ is $P$-a.s.\ bounded from below by the martingale $E^P[f|\G]$; this implies the supermartingale property by Fatou's lemma.
\end{proof}


\section{Application to Nonlinear L\'evy Processes}\label{se:Levy}

In this section, we give a natural example for the above theory in the context of nonlinear L\'evy processes. This model was first introduced by \cite{HuPeng.09levy} with a construction based on partial integro-differential equations (PIDE), thus extending the notions of \cite{Peng.07, Peng.08} to jump processes. A more general construction of nonlinear L\'evy processes was provided in \cite{NeufeldNutz.13b}.

We continue to use the canonical setup $\Omega=D_0(\R_+,\R^{d'})$ of the preceding section, and we choose $S$ to be the canonical process $S_{t}(\omega)=\omega_{t}$ (i.e., $d=d'$). Let $\fP_{sem}$ be the set of all probabilities $P$ under which $S$ is a seminartingale (again, we need not be specific about the filtration; cf.\ \cite[Proposition~2.2]{NeufeldNutz.13a}). We shall focus on the subset
\begin{equation*}
  \fP^{ac}_{sem}=\big\{P\in \fP_{sem}:\, (B^P,C^P,\nu^P)\ll dt, \; P\as\big\}
\end{equation*}
of semimartingales having absolutely continuous characteristics with respect to the Lebesgue measure $dt$.
Given $P\in\fP^{ac}_{sem}$, we can consider the associated differential characteristics $(b^P,c^P, F^P)$ defined via $(dB^P,dC^P,d\nu^P)=(b^P dt, c^Pdt, F^Pdt)$. The differential characteristics take values in $\R^d \times \S^d_+\times\cL$, where $\S^d_+$ is the set of symmetric nonnegative definite $d\times d$-matrices and
\begin{equation*}
  \cL= \bigg\{ F \mbox{ measure on } \R^d: \, \int_{\R^d} |x|^2 \wedge 1 \, F(dx) <\infty \ \mbox{and} \ F(\{0\})=0  \bigg\}
\end{equation*}
is the set of all L\'evy measures (a separable metric space under a suitable weak convergence topology; cf.\ \cite[Section~2]{NeufeldNutz.13a}). An element $(b,c,F)\in \R^d \times \S^d_+\times\cL$ is called a L\'evy triplet, and we recall that for every such triplet there exists a L\'evy process having $(b,c,F)$ as its differential characteristics.

Let $\emptyset\neq\Theta\subseteq \R^d \times \S^d_+\times\cL$ be a collection of L\'evy triplets, then we can consider the set of all semimartingale laws whose differential characteristics evolve in~$\Theta$,
\[
  \fP_\Theta:=\big\{P\in \fP^{ac}_{sem}:\, (b^P,c^P,F^P)\in \Theta, \,P\otimes dt\mbox{-a.e.}\big\}.
\]
This will be our basic set $\fP$ in the sequel; indeed, by \cite[Theorem~2.1]{NeufeldNutz.13b}, we have the following fact. 

\begin{lemma}\label{le:levyCondA}
  Let $\Theta\subseteq \R^d \times \S^d_+\times\cL$ be Borel-measurable and $\fP(s,\omega):=\fP_\Theta$ for all $(s,\omega)\in \R_+\times\Omega$. Then the collection $\{\fP(s,\omega)\}$ satisfies Condition~(A).
\end{lemma}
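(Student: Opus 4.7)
The plan is to verify the three items of Condition~(A) for the constant family $\fP(s,\omega)\equiv\fP_\Theta$. Since this verification is essentially the content of \cite[Theorem~2.1]{NeufeldNutz.13b} together with the measurability framework of \cite{NeufeldNutz.13a}, I would structure the proof by matching each item of Condition~(A) to the corresponding statement in those references.

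For item~(i), because $\fP(t,\omega)=\fP_\Theta$ does not depend on $(t,\omega)$, the set $\{(P',\omega):P'\in \fP(t,\omega)\}$ equals $\fP_\Theta\times\Omega$, so it suffices to show that $\fP_\Theta$ is an analytic (indeed Borel) subset of $\fP(\Omega)$. My plan is to invoke the joint Borel-measurability of the differential characteristics established in \cite[Section~2]{NeufeldNutz.13a}: the map $(P,\omega,t)\mapsto (b^P_t,c^P_t,F^P_t)(\omega)$ is measurable enough that Fubini's theorem turns the condition ``$(b^P,c^P,F^P)\in\Theta$ holds $P\otimes dt$-a.e.'' into a Borel-measurable property of $P\in\fP(\Omega)$ whenever $\Theta$ is Borel.

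For item~(ii), given $P\in\fP_\Theta$, I would use the standard fact that under the regular conditional probability $P^{t-s,\omega}$ the canonical process is again a semimartingale with absolutely continuous characteristics, which, for $P$-a.e.~$\omega$, coincide with the time-shifts of $(b^P,c^P,F^P)$ restricted to $[t-s,\infty)$; since these already lie in $\Theta$ under $P\otimes dt$, one concludes $P^{t-s,\omega}\in\fP_\Theta=\fP(t,\bomega\otimes_s\omega)$. For item~(iii), the measure $\bar P$ obtained by pasting $P$ with the kernel $\kappa$ has differential characteristics that agree, up to shifts, with those of $P$ on $[0,t-s)$ and with those of $\kappa(\omega)$ thereafter; both sit in $\Theta$ by hypothesis, so $\bar P\in\fP_\Theta$. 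The constancy of $\fP_\Theta$ in $(s,\bomega)$ and the translation-invariance built into $\otimes_s$ (all paths start at the origin) are what make the indexing in~(ii) and~(iii) compatible.

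The main obstacle is item~(i): a clean Borel-measurable parametrization of the differential characteristics $P\mapsto(b^P,c^P,F^P)$ on the Polish space $\fP(\Omega)$ requires nontrivial measurable-selection arguments, and this is the technical heart of \cite{NeufeldNutz.13a,NeufeldNutz.13b}. Once that infrastructure is accepted, items~(ii) and~(iii) reduce to well-known behavior of semimartingale characteristics under regular conditioning and concatenation, and the lemma reduces to a bookkeeping exercise invoking \cite[Theorem~2.1]{NeufeldNutz.13b}.
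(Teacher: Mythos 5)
Your proposal is correct and matches the paper's treatment: the paper gives no argument beyond citing \cite[Theorem~2.1]{NeufeldNutz.13b}, which is exactly the result you invoke, and your sketch of how that theorem handles items (i)--(iii) (Borel-measurability of $P\mapsto(b^P,c^P,F^P)$ for item (i), stability of absolutely continuous characteristics under conditioning and pasting for items (ii)--(iii)) is a faithful outline of its proof.
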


To wit, in the present example, $\fP(s,\omega)$ does not depend on $(s,\omega)$ at all; this reflects the fact that the L\'evy process is homogeneous in time and space. The dependence would be nontrivial, e.g., in the context of a controlled stochastic differential equation as in~\cite{Nutz.11} or for the random $G$-expectation as in~\cite{Nutz.10Gexp, NutzVanHandel.12}.

Next, we determine when $\fP_\Theta$ satisfies the conditions of our main results. To this end, we introduce the set of L\'evy measures with integrable jumps,
\[
  \cL^{*} = \left\{ F\in \cL:\, \int (|x|^{2}\wedge |x|)\, F(dx)<\infty\right\}
\]
and denote by $\S^{d}_{++}\subseteq \S^{d}_{+}$ the set of strictly positive definite matrices.

\begin{lemma}\label{le:recipe}
	Let $\Theta\subseteq \R^d \times \S^d_+\times\cL$. The set $\fP_\Theta$ consists of sigma martingale measures for $S$ if and only if $\Theta$ is of the form 
  \begin{equation}\label{eq:formTheta}
    \Theta = \left\{(b,c,F)\in \R^d \times \S^d_+\times\cL:\, (c,F)\in \Theta',\, b = \int (x-h(x))\, F(dx) \right\}
  \end{equation}
  for some set $\Theta'\subseteq \S^{d}_{+}\times \cL^{*}$. 
  In this case, $S$ has dominating diffusion under all $P\in\fP_\Theta$ if and only if
  \begin{equation}\label{eq:diffusionTheta}
    \Theta' \; \subseteq \; (\S^d_+\times \{0\}) \cup  (\S^d_{++}\times\cL^{*}).
  \end{equation}
  Moreover, the following condition is sufficient for $\fP_{\Theta}$ to be saturated:
  \begin{equation}\label{eq:saturationCond}
    \mbox{$(c,F)\in\Theta'$ implies $(c,\psi F)\in\Theta'$ for all $\psi>0$ such that $\psi F\in\cL^{*}$},
  \end{equation}
  where $\psi: \R^{d}\to (0,\infty)$ is Borel-measurable.
\end{lemma}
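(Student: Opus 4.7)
I plan to address the three assertions of the lemma separately, exploiting in each case the fact that any triplet $(b,c,F)\in\Theta$ can be realized as the constant differential characteristics of a L\'evy process $P\in\fP_\Theta$ (by the existence theorem for L\'evy processes with prescribed triplet). This reduces global statements about $\fP_\Theta$ to pointwise statements about elements of $\Theta$, and conversely lets me produce counterexamples from any offending triplet.

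For the sigma martingale characterization, I invoke \cite[Proposition III.6.35, p.\,215]{JacodShiryaev.03}: a semimartingale is a sigma martingale iff $|x|^{2}\wedge|x|$ is $\nu^{P}$-integrable (up to localization) and the first characteristic satisfies $B^{P}=(x-h(x))\ast\nu^{P}$. Passed to differential characteristics, this reads $F^{P}\in\cL^{*}$ and $b^{P}=\int(x-h(x))F^{P}(dx)$, $P\otimes dt$-a.e. The ``if'' direction of \eqref{eq:formTheta} is then immediate, while for ``only if'' the L\'evy realization forces every element of $\Theta$ to satisfy the pointwise constraint, so that $b$ is determined by $(c,F)$ via the displayed integral and $F$ lies in $\cL^{*}$.

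For the dominating diffusion characterization, unfolding the definition gives, for each component $i$,
\[
(|x|^{2}\wedge 1)\ast\nu^{P}_{t}=\int_{0}^{t}\!\!\int(|y|^{2}\wedge 1)\,F^{P}_{s}(dy)\,ds,\qquad (C^{P})^{ii}_{t}=\int_{0}^{t}c^{P,ii}_{s}\,ds,
\]
so that $\nu^{P}\ll(C^{P})^{ii}$ $P$-a.s.\ reduces to the pointwise requirement that $F^{P}=0$ on $\{c^{P,ii}=0\}$, $P\otimes dt$-a.e. The sufficiency of \eqref{eq:diffusionTheta} is then immediate: on $\{c^{P}\in\S^{d}_{++}\}$ every diagonal $c^{P,ii}$ is strictly positive, while on the complement $F^{P}=0$. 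Necessity is again obtained by a L\'evy counterexample: instantiating any violating triplet as a constant-characteristic process in $\fP_{\Theta}$ manufactures a $P$ for which the pointwise requirement, and hence dominating diffusion, fails.

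For saturation, I invoke Girsanov's theorem for semimartingales \cite[Theorem III.3.24]{JacodShiryaev.03}: for $P\in\fP_{\Theta}$ and an equivalent sigma martingale measure $P'\sim P$, the continuous second characteristic is preserved, $c^{P'}=c^{P}$, while $F^{P'}=Y\,F^{P}$ for some predictable density $Y>0$, and $b^{P'}$ is then pinned down by the sigma martingale condition applied to $P'$. Since $P'$ itself is a sigma martingale measure, the first part gives $F^{P'}\in\cL^{*}$; consequently, at each $(\omega,t)$ the fiberwise map $\psi(\cdot):=Y(\omega,t,\cdot)$ is a strictly positive Borel function with $\psi F^{P}\in\cL^{*}$, so \eqref{eq:saturationCond} forces $(c^{P'},F^{P'})\in\Theta'$ $P\otimes dt$-a.e., i.e.\ $P'\in\fP_{\Theta}$. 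The main technical subtlety is precisely this last step, where I must ensure that a jointly measurable, strictly positive version of the jump density is available fiberwise in $x$—a feature built into the standard Jacod--Shiryaev formulation of the jump Girsanov transform and therefore not a real obstacle.
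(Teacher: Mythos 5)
Your proposal is correct and follows essentially the same route as the paper's proof: \cite[Proposition~III.6.35]{JacodShiryaev.03} together with the realization of arbitrary triplets by L\'evy processes for the sigma-martingale and dominating-diffusion characterizations, and the general Girsanov theorem (preservation of $c$, multiplication of $F$ by a positive predictable density) combined with \eqref{eq:saturationCond} for saturation. No substantive differences.
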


We remark that $\Theta$ is nonempty and Borel-measurable if and only if $\Theta'$ is. The condition~\eqref{eq:saturationCond} is not quite necessary for saturation; for instance, with $\Theta=\{(0,0,\delta_{1})\}$ we obtain the set corresponding to the compensated Poisson process, which is saturated but violates~\eqref{eq:saturationCond}. However, an arbitrary change of intensity is clearly possible as soon there is a nonvanishing Brownian component, so that~\eqref{eq:saturationCond} is not that far from being necessary.

\begin{proof}
  By \cite[Proposition~III.6.35, p.\,215]{JacodShiryaev.03}, $S$ is a sigma martingale under $P\in\fP^{ac}_{sem}$ if and only if its differential characteristics $(b^{P}_{t},c^{P}_{t},F^{P}_{t})$ satisfy
  \begin{equation}\label{eq:sigmaMartCharact}
    F^{P}_t\in\cL^{*} \quad \mbox{and}\quad  b^{P}_{t} + \int (x-h(x))\, F^{P}_{t}(dx)=0, \quad P\times dt\mbox{-a.e.}
  \end{equation}
  The ``if'' statement in the first claim follows immediately, whereas for the reverse assertion we use the existence of L\'evy processes with arbitrary triplets.
 
  It is straightforward to see that~\eqref{eq:diffusionTheta} implies dominating diffusion. Conversely, if the latter holds, then for any L\'evy law $P\in\fP_{\Theta}$ we must have either $F^{P}=0$ (and then $c^{P}$ can be arbitrary), or $(|x|^{2}\wedge 1) \ast \nu^{P}_{t} =  (\int (|x|^{2}\wedge 1)\,dF^{P}) t$ is strictly increasing and then $c^{P}$ must be positive.

  Finally, we show that~\eqref{eq:saturationCond} is sufficient for saturation. Indeed, let $P\in\fP_{\Theta}$  and  $P'\sim P$. Then the general Girsanov theorem \cite[Proposition~III.3.24, p.\,172]{JacodShiryaev.03} shows that $P'\in\fP^{ac}_{sem}$ and the corresponding characteristics satisfy $c^{P'}=c^{P}$ and $F^{P'}=\psi F^{P}$ for some positive predictable function $\psi=\psi(\omega,t,x)$. If $P'$ is a sigma martingale measure for $S$, then $\psi F^{P}\in \cL^{*}$ $P\times dt$-a.e.\ by~\eqref{eq:sigmaMartCharact} and now~\eqref{eq:saturationCond} yields that $P'\in\fP_{\Theta}$.  
\end{proof}

As a consequence of the two preceding lemmas, we have the following result.

\begin{proposition}\label{pr:LevyWorks}
  Let $\emptyset\neq \Theta\subseteq \R^d \times \S^d_+\times\cL$ be Borel-measurable, of the form~\eqref{eq:formTheta} and satisfy~\eqref{eq:diffusionTheta} and~\eqref{eq:saturationCond}. Then Theorem~\ref{th:duality} applies to $\fP=\fP_{\Theta}$.
\end{proposition}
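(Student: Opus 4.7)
The plan is to verify, one by one, that $\fP := \fP_\Theta$ meets every hypothesis of Theorem~\ref{th:duality}: (i) the induced family $\{\fP(s,\omega) \equiv \fP_\Theta\}$ satisfies Condition~(A); (ii) $\fP$ is nonempty; (iii) every $P \in \fP$ is a sigma martingale measure for $S$; (iv) $\fP$ is saturated; and (v) $S$ has dominating diffusion under each $P \in \fP$. Since the assumption on the claim $f$ in Theorem~\ref{th:duality} is extrinsic to $\fP$, the proposition is essentially a bookkeeping exercise assembling Lemmas~\ref{le:levyCondA} and~\ref{le:recipe}.

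First I would dispose of (i) by a direct appeal to Lemma~\ref{le:levyCondA}, whose sole hypothesis---Borel-measurability of $\Theta$---is part of our assumptions. For (ii), I would pick any $(b,c,F) \in \Theta$ (available since $\Theta \neq \emptyset$) and let $P$ denote the law on $\Omega = D_0(\R_+, \R^{d'})$ of the L\'evy process with triplet $(b,c,F)$; such a $P$ lies in $\fP^{ac}_{sem}$ with constant differential characteristics $(b,c,F) \in \Theta$, so $P \in \fP_\Theta$.

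It remains to handle (iii)--(v), all of which fall out of Lemma~\ref{le:recipe}: the assumed form~\eqref{eq:formTheta} of $\Theta$ is the ``if'' direction of the first equivalence and yields (iii); the inclusion~\eqref{eq:diffusionTheta} is the sufficient condition for (v); and the change-of-intensity hypothesis~\eqref{eq:saturationCond} on $\Theta'$ is the sufficient condition for (iv). With (i)--(v) established, Theorem~\ref{th:duality} applies to $\fP_\Theta$ and delivers both the duality formula and the existence of an optimal superhedging strategy for any admissible $f$. There is no genuine obstacle at this stage: the substantive work is packed into the two preceding lemmas and, at the next level, into the measurable-selection and characteristics machinery of~\cite{NeufeldNutz.13a, NeufeldNutz.13b}.
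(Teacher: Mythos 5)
Your proposal is correct and takes essentially the same route as the paper, which simply states that the proposition is ``a consequence of the two preceding lemmas'': Lemma~\ref{le:levyCondA} supplies Condition~(A), Lemma~\ref{le:recipe} supplies the sigma-martingale, dominating-diffusion, and saturation hypotheses, and nonemptiness of $\fP_\Theta$ follows from $\Theta\neq\emptyset$ via the existence of a L\'evy law for any triplet. You are slightly more explicit than the paper in spelling out the nonemptiness step, but the substance is identical.
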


\begin{remark}\label{rk:sigmaMartUseful}
  (i) The present section illustrates that it is convenient to work with sigma martingales rather than local martingales. Indeed, even though a L\'evy sigma martingale (in the classical sense) is automatically a local (even true) martingale, it may easily happen that $\fP_{\Theta}$ contains sigma martingale measures which are not local martingale measures. To see this, recall that local martingales are described by satisfying, in addition to~\eqref{eq:sigmaMartCharact}, the condition that $\int_{0}^{t}\int (|x|^{2}\wedge |x|)\, F_{t}(dx)dt<\infty$ $P$-a.s.\ for all $t$. With a view towards Condition~(A), this property would be inconvenient to deal with.

  (ii) In our context of superhedging, it is crucial to use the general construction of nonlinear L\'evy processes as in \cite{NeufeldNutz.13b} rather than the PIDE-based construction of \cite{HuPeng.09levy}, even in the realm of finite variation jumps and continuous claims. Indeed, the saturation condition (which is vital) typically leads to a violation of the compactness conditions which ensure a well-defined PIDE in \cite{HuPeng.09levy}. For instance, if we want to consider for $S$ the sum of a Wiener process and a Poisson process, saturation imposes that we must allow for unbounded intensity changes in the jump part.
\end{remark}

\subsection{Exponential L\'evy Processes}

In the context of stock price modeling in finance, the process $S$ is often of exponential form; see, e.g., \cite{NeufeldNutz.15}. We continue with the same setting as above, except that we write~$X$ for the canonical process $X_{t}(\omega)=\omega_{t}$ and aim to define $S$ as the stochastic exponential of $X$.

More precisely, let $\fP_\Theta$ be a set of sigma martingale measures for $X$. The exponential stochastic differential equation
\begin{equation}\label{eq:exponential}
  dS=\diag (S_{-})\,dX,\quad S_{0}=1
\end{equation}
can be solved ``pathwise'' as in \cite{Karandikar.95}, or alternatively via the Dol\'eans--Dade formula \cite[I.4.64, p.\,59]{JacodShiryaev.03} and the arguments in the proof of Theorem~\ref{th:optDecomp}: recalling that $\mathcal{N}^{\fP}\subseteq \cG_{0}$ and using the fact that $X$ is c\`adl\`ag, we can construct a c\`adl\`ag, $\G_{+}$-adapted (even $\G$-adapted) process $S$ which solves~\eqref{eq:exponential} under all $P\in\fP_\Theta$. In order to have positive prices, we choose $\Theta$ such that the contained L\'evy measures are concentrated on $(-1,\infty]^{d}$, which is equivalent to having $\Delta X^{i} > -1$ $\fP_{\Theta}$-q.s.\ and also to $S^{i} > 0$ and $S^{i}_{-} > 0$ $\fP_{\Theta}$-q.s.
 Then, $P$ is a sigma martingale measure for $X$ if and only if it is a sigma martingale measure for $S$, and the characterization~\eqref{eq:diffusionTheta} for the dominating diffusion property remains valid. As a result, Lemma~\ref{le:recipe} carries over to the exponential case without changes, and so does Proposition~\ref{pr:LevyWorks}. We remark that in this situation, there is no difference between sigma martingale measures and local martingale measures, as any positive sigma martingale is a local martingale.


\begin{thebibliography}{10}

\bibitem{AcciaioBeiglbockPenknerSchachermayer.12}
B.~Acciaio, M.~Beiglb{\"o}ck, F.~Penkner, and W.~Schachermayer.
\newblock A model-free version of the fundamental theorem of asset pricing and
  the super-replication theorem.
\newblock {\em To appear in Math. Finance}, 2013.

\bibitem{BayraktarHuangZhou.13}
E.~Bayraktar, Y.-J. Huang, and Z.~Zhou.
\newblock On hedging {A}merican options under model uncertainty.
\newblock {\em SIAM J. Financial Math.}, 6(1):425--447, 2015.

\bibitem{BayraktarZhou.14}
E.~Bayraktar and Z.~Zhao.
\newblock On arbitrage and duality under model uncertainty and portfolio
  constraints.
\newblock {\em To appear in Math. Finance}, 2014.

\bibitem{BertsekasShreve.78}
D.~P. Bertsekas and S.~E. Shreve.
\newblock {\em Stochastic Optimal Control. The Discrete-Time Case.}
\newblock Academic Press, New York, 1978.

\bibitem{BiaginiBouchardKardarasNutz.14}
S.~Biagini, B.~Bouchard, C.~Kardaras, and M.~Nutz.
\newblock Robust fundamental theorem for continuous processes.
\newblock {\em To appear in Math. Finance}, 2014.

\bibitem{BouchardNutz.13}
B.~Bouchard and M.~Nutz.
\newblock Arbitrage and duality in nondominated discrete-time models.
\newblock {\em Ann. Appl. Probab.}, 25(2):823--859, 2015.

\bibitem{CheriditoKupperTangpi.14}
P.~Cheridito, M.~Kupper, and L.~Tangpi.
\newblock Representation of increasing convex functionals with countably
  additive measures.
\newblock {\em Preprint arXiv:1502.05763v1}, 2015.

\bibitem{DelbaenSchachermayer.99}
F.~Delbaen and W.~Schachermayer.
\newblock A compactness principle for bounded sequences of martingales with
  applications.
\newblock In {\em Seminar on {S}tochastic {A}nalysis, {R}andom {F}ields and
  {A}pplications ({A}scona, 1996)}, volume~45 of {\em Progr. Probab.}, pages
  137--173. Birkh{\"a}user, Basel, 1999.

\bibitem{DellacherieMeyer.82}
C.~Dellacherie and P.~A. Meyer.
\newblock {\em Probabilities and Potential B}.
\newblock North Holland, Amsterdam, 1982.

\bibitem{DenisMartini.06}
L.~Denis and C.~Martini.
\newblock A theoretical framework for the pricing of contingent claims in the
  presence of model uncertainty.
\newblock {\em Ann. Appl. Probab.}, 16(2):827--852, 2006.

\bibitem{Dolinsky.13}
Y.~Dolinsky.
\newblock Hedging of game options under model uncertainty in discrete time.
\newblock {\em Electron. Commun. Probab.}, 19(19):1--11, 2014.

\bibitem{DolinskySoner.12}
Y.~Dolinsky and H.~M. Soner.
\newblock Martingale optimal transport and robust hedging in continuous time.
\newblock {\em Probab. Theory Related Fields}, 160(1--2):391--427, 2014.

\bibitem{DolinskySoner.14}
Y.~Dolinsky and H.~M. Soner.
\newblock Martingale optimal transport in the {S}korokhod space.
\newblock {\em To appear in Stochastic Process. Appl.}, 2014.

\bibitem{DolinskySoner.13}
Y.~Dolinsky and H.~M. Soner.
\newblock Robust hedging with proportional transaction costs.
\newblock {\em Finance Stoch.}, 18(2):327--347, 2014.

\bibitem{FollmerKabanov.98}
H.~F{\"o}llmer and Yu. Kabanov.
\newblock Optional decomposition and {L}agrange multipliers.
\newblock {\em Finance Stoch.}, 2(1):69--81, 1998.

\bibitem{FollmerKramkov.97}
H.~F{\"o}llmer and D.~Kramkov.
\newblock Optional decompositions under constraints.
\newblock {\em Probab. Theory Related Fields}, 109(1):1--25, 1997.

\bibitem{HuPeng.09levy}
M.~Hu and S.~Peng.
\newblock {$G$}-{L}{\'e}vy processes under sublinear expectations.
\newblock {\em Preprint arXiv:0911.3533v1}, 2009.

\bibitem{JacodShiryaev.03}
J.~Jacod and A.~N. Shiryaev.
\newblock {\em Limit Theorems for Stochastic Processes}.
\newblock Springer, Berlin, 2nd edition, 2003.

\bibitem{Karandikar.95}
R.~L. Karandikar.
\newblock On pathwise stochastic integration.
\newblock {\em Stochastic Process. Appl.}, 57(1):11--18, 1995.

\bibitem{ElKarouiQuenez.95}
N.~El Karoui and M.-C. Quenez.
\newblock Dynamic programming and pricing of contingent claims in an incomplete
  market.
\newblock {\em SIAM J. Control Optim.}, 33(1):29--66, 1995.

\bibitem{Kramkov.96}
D.~Kramkov.
\newblock Optional decomposition of supermartingales and hedging contingent
  claims in incomplete security markets.
\newblock {\em Probab. Theory Related Fields}, 105(4):459--479, 1996.

\bibitem{NeufeldNutz.12}
A.~Neufeld and M.~Nutz.
\newblock Superreplication under volatility uncertainty for measurable claims.
\newblock {\em Electron. J. Probab.}, 18(48):1--14, 2013.

\bibitem{NeufeldNutz.13a}
A.~Neufeld and M.~Nutz.
\newblock Measurability of semimartingale characteristics with respect to the
  probability law.
\newblock {\em Stochastic Process. Appl.}, 124(11):3819--3845, 2014.

\bibitem{NeufeldNutz.13b}
A.~Neufeld and M.~Nutz.
\newblock Nonlinear {L\'e}vy processes and their characteristics.
\newblock {\em To appear in Trans. Amer. Math. Soc.}, 2014.

\bibitem{NeufeldNutz.15}
A.~Neufeld and M.~Nutz.
\newblock Robust utility maximization with {L\'e}vy processes.
\newblock {\em Preprint arXiv:1502.05920v1}, 2015.

\bibitem{Nutz.11int}
M.~Nutz.
\newblock Pathwise construction of stochastic integrals.
\newblock {\em Electron. Commun. Probab.}, 17(24):1--7, 2012.

\bibitem{Nutz.11}
M.~Nutz.
\newblock A quasi-sure approach to the control of non-{M}arkovian stochastic
  differential equations.
\newblock {\em Electron. J. Probab.}, 17(23):1--23, 2012.

\bibitem{Nutz.10Gexp}
M.~Nutz.
\newblock Random {$G$}-expectations.
\newblock {\em Ann. Appl. Probab.}, 23(5):1755--1777, 2013.

\bibitem{NutzSoner.10}
M.~Nutz and H.~M. Soner.
\newblock Superhedging and dynamic risk measures under volatility uncertainty.
\newblock {\em SIAM J. Control Optim.}, 50(4):2065--2089, 2012.

\bibitem{NutzVanHandel.12}
M.~Nutz and R.~van Handel.
\newblock Constructing sublinear expectations on path space.
\newblock {\em Stochastic Process. Appl.}, 123(8):3100--3121, 2013.

\bibitem{NutzZhang.12}
M.~Nutz and J.~Zhang.
\newblock Optimal stopping under adverse nonlinear expectation and related
  games.
\newblock {\em To appear in Ann. Appl. Probab.}, 2012.

\bibitem{Peng.07}
S.~Peng.
\newblock {$G$}-expectation, {$G$}-{B}rownian motion and related stochastic
  calculus of {I}t{\^o} type.
\newblock In {\em Stochastic Analysis and Applications}, volume~2 of {\em Abel
  Symp.}, pages 541--567, Springer, Berlin, 2007.

\bibitem{Peng.08}
S.~Peng.
\newblock Multi-dimensional {$G$}-{B}rownian motion and related stochastic
  calculus under {$G$}-expectation.
\newblock {\em Stochastic Process. Appl.}, 118(12):2223--2253, 2008.

\bibitem{Peng.10}
S.~Peng.
\newblock Nonlinear expectations and stochastic calculus under uncertainty.
\newblock {\em Preprint arXiv:1002.4546v1}, 2010.

\bibitem{PossamaiRoyerTouzi.13}
D.~Possamai, G.~Royer, and N.~Touzi.
\newblock On the robust superhedging of measurable claims.
\newblock {\em Electron. Commun. Probab.}, 18(95):1--13, 2013.

\bibitem{SonerTouziZhang.2010rep}
H.~M. Soner, N.~Touzi, and J.~Zhang.
\newblock Martingale representation theorem for the {$G$}-expectation.
\newblock {\em Stochastic Process. Appl.}, 121(2):265--287, 2011.

\bibitem{SonerTouziZhang.2010dual}
H.~M. Soner, N.~Touzi, and J.~Zhang.
\newblock Dual formulation of second order target problems.
\newblock {\em Ann. Appl. Probab.}, 23(1):308--347, 2013.

\bibitem{Song.10}
Y.~Song.
\newblock Some properties on {$G$}-evaluation and its applications to
  {$G$}-martingale decomposition.
\newblock {\em Sci. China Math.}, 54(2):287--300, 2011.

\end{thebibliography}

\newcommand{\dummy}[1]{}

\end{document}